\newcommand{\dd}{{\rm d}}
\newcommand{\ex}{{\rm e}}
\newcommand{\cauchy}{C_{\mathcal{D}}}
\newcommand{\rd}{\mathbb{R}^{d}}
\newcommand{\rk}{\mathbb{R}^{k}}
\newcommand{\eps}{\varepsilon}
\newcommand{\norm}[1]{\left \rVert {#1} \right \rVert}
\newcommand{\net}{\mathcal{N}}
\DeclareMathOperator*{\prob}{\mathbf{Pr}}
\DeclareMathOperator*{\EE}{\mathbb{E}}
\title{Near-Neighbor Preserving Dimension Reduction for Doubling Subsets of \texorpdfstring{$\ell_1$}{l1}}
\author{Ioannis Z. Emiris}{Department of Informatics \& Telecommunications, \and
National \& Kapodistrian University of Athens, Greece \and 
ATHENA Research \& Innovation Center, Greece}{emiris@di.uoa.gr}{}{Partially supported by the European Union's H2020 research and innovation programme under grant agreement No. 734242 (LAMBDA).}
\author{Vasilis Margonis}{Department of Informatics \& Telecommunications, \and
National \& Kapodistrian University of Athens, Greece}{basilis.math@gmail.com}{}{}
\author{Ioannis Psarros\footnote{This work was done while the third author was a PhD candidate in the Department of Informatics \& Telecommunications, 
National \& Kapodistrian University of Athens, Greece}}{Institute of Computer Science, University of Bonn, Germany}{ipsarros@uni-bonn.de}{}{Generously supported by the Hausdorff Center for Mathematics.}
\authorrunning{I.\,Z.\ Emiris, V.\ Margonis and I.\ Psarros}%TODO mandatory. First: Use abbreviated first/middle names. Second (only in severe cases): Use first author plus 'et al.'
\keywords{Approximate nearest neighbor, Manhattan metric, randomized embedding}%TODO mandatory; please add comma-separated list of keywords
\begin{document}

\maketitle

\begin{abstract}
Randomized dimensionality reduction has been recognized as one of the fundamental techniques in handling high-dimensional data. Starting with the celebrated Johnson-Lindenstrauss Lemma, such reductions have been studied in depth for the Euclidean $(\ell_2)$ metric, but much less for the Manhattan $(\ell_1)$ metric. 
Our primary motivation is the approximate nearest neighbor problem in $\ell_1$. We exploit its reduction to the decision-with-witness version, called approximate \textit{near} neighbor, which incurs a roughly logarithmic overhead.
In 2007, Indyk and Naor, in the context of approximate nearest neighbors, introduced the notion of nearest neighbor-preserving embeddings. These are randomized embeddings between two metric spaces with guaranteed bounded distortion only for the distances between a query point and a point set.
Such embeddings are known to exist for both $\ell_2$ and $\ell_1$ metrics, as well as for doubling subsets of $\ell_2$. 
The case that remained open were doubling subsets of $\ell_1$. 
In this paper, we propose a dimension reduction by means of a \textit{near} neighbor-preserving embedding for doubling subsets of $\ell_1$. 
Our approach is to represent the pointset with a carefully chosen covering set, then randomly project the latter. We study two types of covering sets: $c$-approximate $r$-nets and randomly shifted grids, and we discuss the tradeoff between them in terms of preprocessing time and target dimension.
We employ Cauchy variables: certain concentration bounds derived {should be of independent interest}.
\end{abstract}

\section{Introduction}

Proximity search is a fundamental computational problem with several applications in Computer Science and beyond. Proximity problems in metric spaces of low dimension have been typically handled by methods which discretize the space and therefore are affected by the curse of dimensionality, making them unfit for high-dimensional spaces. In the past two decades, the increasing need for analyzing high-dimensional data led researchers to devise randomized and approximation algorithms with polynomial dependence on the dimension.

A fundamental proximity problem is Approximate Nearest Neighbor search. By known reductions~\cite{HIM12}, one can (up to polylogarithmic factors) focus on the decision version with witness, namely the $(c,R)$-Approximate Near Neighbor problem:

\begin{definition}[Approximate Near Neighbor]
Let $(X,d_X)$ be a metric space. Given $P \subseteq X$ and reals $R>0$, $c \geq 1$, build a data structure $\mathcal{S}$ that, given a query point $q \in X$, performs as follows:
\begin{itemize}
\itemsep0em
    \item If the nearest neighbor of $q$ lies within distance at most $R$, then $\mathcal{S}$ is allowed to report any point $p^* \in P$ such that $d_X(q,p^*) \leq cR$.
    \item If all points lie at distance more than $cR$ from $q$, then  $\mathcal{S}$ should return $\bot$.
\end{itemize}
In general, $\mathcal{S}$ returns either a point at distance $\leq cR$ or $\bot$, even when none of the above two cases occurs.
\end{definition}

From now on, we assume $R=1$ because we can re-scale the data set, and we refer to this problem as $c$-ANN, or simply ANN. We focus on subsets of $\ell_1^d$: the input dataset consists of $n$ vectors in $\rd$ and the distance function is the standard $\ell_1$ norm $\| \cdot \|_1$. Note that all logarithms are base $2$.
%---------------------------------------------------------------------------

\subparagraph{Previous work.}
Some highlights in the study of data structures for high-dimensional normed spaces are the various variants, proofs,  and applications of the Johnson Lindenstrauss Lemma (e.g.~\cite{Ach03,AC09,AEP18}), 
sketches based on $p$-stable distributions \cite{I06}, and Locality Sensitive Hashing (e.g.~\cite{IM98,AI08,ALRW17}). In the core of most high-dimensional solutions lies the fact that for certain metric spaces e.g. $\ell_p, p \in [1,2]$, the distance can be efficiently sketched. Spaces which are considered to be harder in this context, such as $\ell_{\infty}$, can also be treated \cite{I01}, and are very interesting since they can be used as host spaces for various norms \cite{ANNRW17}.

Significant amount of work has been undertaken for pointsets of low doubling dimension, since it is today one of the primary paradigms for capturing input structure (formal definitions in the next section). 
For any finite metric space $X$ of doubling dimension $\mathrm{dim}(X)$, there exists a data structure \cite{HPM05,CG06} with
expected preprocessing time $O(2^{\mathrm{dim}(X)} n \log{n})$,
space usage $O(2^{\mathrm{dim}(X)} n)$ (or even $O(n)$) and query time
$O(2^{\mathrm{dim}(X)}\log{n}+\eps^{-O(\mathrm{dim}(X)})$. 

In \cite{IN07}, they introduced the notion of nearest-neighbor preserving embeddings, and it was proven that in this context one can achieve dimension reduction for doubling subsets of $\ell_2$, with the target dimension depending only on the dataset's doubling dimension. Even before, Indyk~\cite{I06} had introduced a randomized embedding for dimension reduction in $\ell_1$, which is suitable for proximity search purposes, and it achieves target dimension polylogarithmic in the size of the pointset.  
Naturally, such approaches can be easily combined with any known data structure  {to be used in the projection space}.  Randomized embeddings have been recently used in the ANN context \cite{BG18}, for doubling subsets of $\ell_p$, $2<p<\infty$.

It is known that dimension reduction in $\ell_1$ cannot be achieved in the same generality as in $\ell_2$, even assuming that the pointset is of low doubling dimension \cite{LMN05}: {there are arbitrarily large $n$-point subsets $P \subseteq \ell_1$ which are doubling with constant $6$, such that every embedding with distortion $D$ of $P$ into $\ell_1^k$ requires dimension $n^{\Omega(1/D^2)}$}. Aiming for more restrictive guarantees, e.g.\ preserving distances within some pre-defined range, is a relevant workaround. Then, dimension reduction techniques for doubling subsets of $\ell_p$, $p\in [1,2]$, exist \cite{BG16}, but they rely on partition algorithms which require the whole pointset to be known in advance. Hence, applicability of such techniques is quite limited and, specifically, it is not clear whether they can be used in an online setting where query points are not known beforehand. 

\subparagraph{Contribution.}
In this paper, we establish two non-linear \textit{near} neighbor-preserving embeddings for doubling subsets of $\ell_1^d$. We use a definition which is essentially a modified version of the nearest neighbor preserving embedding of \cite{IN07}: the guarantees which are required are weaker since we consider the decision version of the problem,  therefore the embedding depends on some range parameter $R>0$. 

\begin{definition}[Near-neighbor preserving embedding]  \label{Dnnpres}
Let $(Y,d_Y)$, $(Z,d_Z)$ be metric spaces and $X \subseteq Y$. A distribution over mappings $f:Y \rightarrow Z$ is a \emph{near-neighbor preserving embedding} with range $R>0$, distortion $D \geq 1$ and probability of correctness $\mathcal{P} \in [0,1]$ if for every $\alpha \geq D$ and any $q \in Y$, if $x \in X$ is such that $d_Y(x,q) \leq R$, then 
with probability at least $\mathcal{P}$, 
\begin{itemize}
    \item $d_Z(f(x),f(q))\leq D  \cdot R$, 
    \item $\forall p \in X:~ d_{Y}(p,q) > D \cdot \alpha \cdot R \implies d_{Z}(f(p),f(q)) > \alpha \cdot R$.
\end{itemize}
\end{definition}

Considering a pointset $P \subset \ell_1^d$ of cardinality $n$, our results concern $\ell_1^k$ as the target space, where $k$ depends on the doubling dimension of $P$. We assume that $R=1$, since we can rescale the dataset. More specifically:

\begin{enumerate} \itemsep1em
    \item In Theorem \ref{theorem:main}, we prove that for every $\eps \in (0,1/2)$ and $c \geq 1$, there is a randomized mapping $h: \ell_1^d \rightarrow \ell_1^k$ that can be computed in time $\tilde{O}(dn^{1+1/ \Omega(c)})$ and is \textit{near} neighbor-preserving for $P$ with distortion $1{+}6\eps$ and probability of correctness $\Omega(\eps)$, where
    \[ k = \left( \log{\lambda_P} \cdot \log(c / \eps) \right)^{\Theta(1/\eps)} / \zeta(\eps), \]
    for a function $\zeta(\eps)>0$ depending only on $\eps$.
    Although the mapping $h$ depends on the pointset, the parameter $c$ is user-defined and therefore provides a trade-off between preprocessing time and target dimension.
    
    \item In Theorem \ref{theorem:second}, we show that for every $\eps \in (0,1/2)$, there is a randomized mapping $h': \ell_1^d \rightarrow \ell_1^k$ that can be computed in time $O(dkn)$ and is \textit{near} neighbor-preserving for $P$ with distortion $1{+}6 \eps$ and probability of correctness $\Omega(\eps)$, where
    \[ k = \left( \log{\lambda_P} \cdot \log(d / \eps) \right)^{\Theta(1/\eps)} / \zeta(\eps),  \]
    for a function $\zeta(\eps)>0$ depending only on $\eps$. 
    In this case, the function $h'$ is oblivious to $P$ and well-defined over the whole space, but the target dimension depends on $d$.
\end{enumerate}

On the low-preprocessing-time extreme, one can embed the dataset in near-linear time, but the target dimension is polynomial in $\log \log n$. This is to be juxtaposed to the analogous result by Indyk \cite{I06}, which provides with target dimension  polynomial in $\log n$, without any assumption on the doubling dimension of the dataset. 
On the other hand, one can obtain a preprocessing time of $dn^{1+\delta}$ for any constant $\delta>0$, and target dimension which depends solely on the doubling dimension. 

\subparagraph{Techniques.} Both embeddings consist of two basic components. First, we represent the pointset $P$ with an $\eps$-covering set, and then we apply a random linear projection \`{a} la Indyk \cite{I06} to that set, using Cauchy variables.

The role of the covering set is to exploit the doubling dimension of $P$. In the analogous result for $\ell_2$ \cite{IN07}, no representative sets were used; the mapping was just a random linear projection of $P$. In the case of $\ell_1$ however, a similar analysis of a linear projection with Cauchy variables without these representative sets seems to be impossible, since the Cauchy distribution is heavy tailed.

In Theorem \ref{theorem:main}, we consider $c$-approximate $r$-nets as a covering set. Inspired by the algorithm of \cite{EHS15} for $\ell_2$, we design an algorithm that computes a $c$-approximate $r$-net in $\ell_1$ in subquadratic --but superlinear-- time. On the other hand, Theorem \ref{theorem:second} relies on randomly shifted grids, which can be computed in linear time, but are inferior to nets in terms of capturing the doubling dimension of the pointset. 

To bound the distortion incurred by the randomized projection, we exploit the 1-stability property of the Cauchy distribution. To this end, we prove a concentration bound for sums of independent Cauchy variables that should be of interest beyond the scope of this paper. To overcome the technical difficulties associated with the heavy tails of the Cauchy distribution, we study sums of \textit{square roots} of Cauchy variables, where in \cite{I06}, Indyk considers sums of \textit{truncated} Cauchy variables instead. Although our concentration bound is rather weak, it is sufficient for our purposes and its analysis is much simpler compared to Indyk's.

\subparagraph{Algorithmic implications.} 
Our results show that efficient dimension reduction for doubling subsets of $\ell_1$ is possible, in the context of ANN. In particular, these results imply efficient sketches, meaning that one can solve ANN with minimal storage per point. Dimension reduction also serves as a problem reduction from a high-dimensional hard instance to a low-dimensional easy instance. Since the algorithms presented in this paper are quite simple, they should also be of practical interest: they easily extend the scope of any implementation which has been optimized to solve the problem in low dimension, so that it may handle high-dimensional data.

Our embedding can be combined with the bucketing method of \cite{HIM12} for the $(1{+} \eps)$-ANN problem in $\ell_1^d$. For instance, setting $c = \log n$ in Theorem \ref{theorem:main}, yields preprocessing time $ dn^{1+ o(1)}$, space $n^{1+ o(1)}$ and query time $O(d){\cdot}(\log{\lambda_P} \cdot \log \log n)^{O(1/ \eps)}$ assuming that the  doubling dimension is a fixed constant. This improves upon existing results: the query time of \cite{KL04} depends on the aspect ratio of the dataset, while 
the data structures of \cite{HPM05,CG06} support queries with time complexity which depends exponentially on the doubling dimension. 
However, it is worth noting that one could potentially improve the results of \cite{KL04,HPM05,CG06} in the special case of $\ell_1$, by employing ANN data structures with fast 
query time, in order to accelerate the traversal of the net-tree. Hence, while our result gives a simple framework for exploiting the intrinsic dimension of doubling subsets of $\ell_1$, it is unlikely that it shall improve upon simple variants of previous results in terms of complexity bounds. 

\subparagraph{Organization.}
The next section introduces basic concepts and some relevant existing results. Section~\ref{Sconc-bound} establishes a concentration bound on sums of independent Cauchy variables.
Section~\ref{Sdimreduc}, achieves dimensionality reduction by means of representing the pointset by a carefully chosen net, while Section~\ref{Sgrids} employs randomly shifted grids for the same task.
We conclude with discussion of results and potential improvements.
%---------------------------------------------------------------------------
\section{Preliminaries}
%---------------------------------------------------------------------------
In this section, we define basic notions about doubling metrics and present useful previous results. 

\begin{definition}
Consider any metric space $(X,\dd_{X})$ and let $B(p,r)=\{x\in X \mid \dd_{X}(x,p)\leq r\}$. The \emph{doubling constant} of $X$, denoted $\lambda_X$, is the smallest integer $\lambda_X$ such that for any $p \in X$ and $r > 0$, the ball $B(p, r)$ can be covered by $\lambda_X$ balls of radius $r/2$ centered at points in $X$. 
\end{definition}

The \emph{doubling dimension} of $(X,\dd_{X})$ is defined as $\log{\lambda_X}$. Nets play an important role in the study of embeddings, as well as in designing efficient data structures for doubling metrics. 

\begin{definition}
For $c \geq 1$, $r >0$ and metric space $(V,d_V)$, a $c$-approximate $r$-net of $V$ is a subset $\mathcal{N} \subseteq V$ such that no two points of $\mathcal{N}$ are within distance $r$ of each other, and every point of $V$ lies within distance at most $c{\cdot}r$ from some point of $\mathcal{N}$.
\end{definition}

\begin{theorem}
\label{theorem:apprnets}
Let $P\subset \ell_1^d$ such that $|P|=n$. Then, for any $c>0$, $r>0$, one can compute a $c$-approximate $r$-net of $P$ in time $\tilde{O}(dn^{1+1/c'})$, where $c'=\Omega(c)$. The result is correct with high probability. The algorithm also returns the assignment of each point of $P$ to the point of the net which covers it.
\end{theorem}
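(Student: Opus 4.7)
I would adapt the locality-sensitive-hashing (LSH) based net-construction that the authors cite from \cite{EHS15} for $\ell_2$ to the $\ell_1$ setting. The algorithm is on-line: maintain an initially empty net $\net\subseteq P$ and an LSH data structure $\mathcal{D}$ supporting $(r,cr)$-near neighbor queries on the current contents of $\net$. Process the points of $P$ in arbitrary order; for each $p\in P$, query $\mathcal{D}$ with $p$ and verify the returned candidate by computing the actual $\ell_1$ distance. If a point $q\in\net$ with $\|p-q\|_1\leq cr$ is found, assign $p$ to $q$; otherwise add $p$ to $\net$, insert it into $\mathcal{D}$, and assign $p$ to itself.

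The workhorse is a standard $\ell_1$ LSH family: hash functions of the form $h(x)=\lfloor (x_j-t)/w\rfloor$ with $j$ uniform in $[d]$, $t$ uniform in $[0,w)$ and $w$ tuned to $r$. This family is $(r,cr,p_1,p_2)$-sensitive with $\rho=\log(1/p_1)/\log(1/p_2)=1/c'$ for some $c'=\Omega(c)$. Standard amplification, concatenating $k=\Theta(\log n)$ hashes per table and using $L=\Theta(n^{1/c'}\log n)$ tables, gives a data structure whose insertion and query time is $\tilde O(dn^{1/c'})$ and which, on any fixed query, reports a point at distance at most $cr$ whenever one at distance at most $r$ exists, with failure probability at most $1/n^{2}$.

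For correctness, the soundness of the verification step guarantees that whenever $p$ is assigned to some $q\in\net$, indeed $\|p-q\|_1\leq cr$, so the covering property of the output holds unconditionally for assigned points (and trivially for points added to $\net$). The separation property is where the high-probability guarantee enters: suppose two net points $p,q$ with $\|p-q\|_1\leq r$ both appear in the final $\net$, and without loss of generality $q$ was inserted before $p$. At the moment $p$ was processed, $q$ was present in $\mathcal{D}$ and within distance $r$, so by completeness $\mathcal{D}$ would have returned some net point within $cr$, and $p$ would not have been added. Union-bounding the $n$ queries against the $1/n^{2}$ per-query failure probability, the whole execution succeeds with probability at least $1-1/n$, yielding a valid $c$-approximate $r$-net together with the desired assignment.

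The running time is dominated by the $n$ LSH operations, giving the claimed $\tilde O(dn^{1+1/c'})$ bound. The main step that requires care is pinning down the LSH family and parameters so that one simultaneously obtains (i) the target exponent $\rho=O(1/c)$ for the full range of $c$ allowed in the statement, and (ii) the $\tilde O(d n^{1/c'})$ per-operation cost; these are standard but need to be stated explicitly for $\ell_1$ rather than quoted from the $\ell_2$ construction of \cite{EHS15}. Beyond that, the argument is a direct simulation of the $\ell_2$ scheme, with all metric-dependent steps replaced by their $\ell_1$ counterparts.
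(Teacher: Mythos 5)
Your proposal is correct in outline and follows essentially the same strategy as the paper: a greedy peeling of $P$ in which LSH with exponent $\rho=O(1/c)$ is used to find, for each point, a previously chosen center within $cr$, with explicit distance verification so that the covering property holds unconditionally and only the packing property is charged to the $n^{-\Omega(1)}$ LSH failure probability. The two instantiations differ in detail. The paper builds all $\Theta(n^{1/c'}\log n)$ hash tables once on the whole of $P$ and then marks covered points bucket by bucket, whereas you insert net points online into a dynamic LSH structure; both organizations give the same $\tilde O(dn^{1+1/c'})$ bound, but note that the per-query cost is only controlled in expectation via the number of false positives, so (as the paper does) one must truncate the scan and amplify by $O(\log n)$ repetitions to recover high probability. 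More importantly, the paper does not posit a direct $\ell_1$ family: it snaps points to a $\delta$-grid inside randomly shifted cells, maps them by a unary encoding into the Hamming cube, and invokes the off-the-shelf $(\rho,c'\rho,n^{-1/c'},n^{-1})$-sensitive bit-sampling family of \cite{HIM12}, which sidesteps exactly the step you flag as needing care.

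That step is where your write-up, as stated, would fail. For the single-coordinate hash $h(x)=\lfloor (x_j-t)/w\rfloor$ the collision probability of a pair with coordinate gaps $\delta_1,\dots,\delta_d$ is $\frac{1}{d}\sum_j \max(0,1-\delta_j/w)$, so a \emph{far} pair whose distance $cr$ is concentrated on one coordinate still collides with probability $1-1/d$. If $w=\Theta(r)$ (``$w$ tuned to $r$''), a near pair at distance $r$ also collides with probability about $1-\Theta(1/d)$, and the exponent $\rho=\ln(1/p_1)/\ln(1/p_2)$ degenerates to $\Theta(r/w)=\Theta(1)$ rather than $O(1/c)$. The fix is to take $w=\Theta(cr)$, which gives $p_1\geq 1-1/(cd)$ and $p_2=1-1/d$, hence $\rho=1/c\cdot(1+O(1/d))$; correspondingly the concatenation length must be $\Theta(d\log n)$ rather than $\Theta(\log n)$, since $p_2$ is only $1-1/d$. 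With these corrections your route goes through and yields the same complexity; the paper's Hamming-embedding detour is just a way of importing these calculations from \cite{HIM12} wholesale, at the cost of the discretization parameter $\delta=1/10dc$.
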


\begin{proof}
We employ some basic ideas from \cite{HIM12}. An analogous result for $\ell_2$ is stated in \cite{EHS15}. First, we assume $r=1$, since we are able to re-scale the point set. Now, we consider a randomly shifted grid with side-length $2$. 
The probability that two points $p,q\in P$ fall into the same grid cell, is at least $1-\|p-q\|_1/2$. For each non-empty grid cell we snap points to a grid: each coordinate is rounded to the nearest multiple of $\delta=1/10dc$.
Then, coordinates are multiplied by $1/\delta$ and each point $x=(x_1,\ldots,x_d) \in [2\delta]^d$ is mapped to $\{0,1\}^{2d /\delta}$ by a function $G$ as follows:
$G(x)=(g(x_1),\ldots,g(x_d))$, 
where $g(z)$ is a binary string of $z$ ones followed by $2/\delta -z$ zeros. For any two points $p,q$ in the same grid cell, let $f(p)$,$f(q)$ be the two binary strings obtained by the above mapping. Notice that,
\begin{equation*}
\|f(p)-f(q)\|_1\in (2/ \delta ) \cdot \|p-q\|_1  \pm 1. 
\end{equation*}
Hence,
\begin{align*}
    \|p-q\|_1 \leq 1 &\implies \|f(p)-f(q)\|_1 \leq (2/ \delta )+1 ,\\[2pt]
    \|p-q\|_1 \geq c &\implies \|f(p)-f(q)\|_1 \geq (2/ \delta )\cdot c-1 .
\end{align*}

Now, we employ the LSH family of \cite{HIM12}, for the Hamming space. After standard concatenation, we can assume that the family is $ (\rho,c'\rho,n^{-1/c'},n^{-1})$-sensitive, where $\rho=\left({2}/{\delta}\right)+1$ and $c'= \Omega(c)$. Let $\alpha=n^{-1/c'}$ and $\beta=n^{-1}$. 

Notice that for the above two-level hashing table we obtain the following guarantees. 
Any two points $p,q\in P$, such that $\|p-q\|_1\leq 1$, fall into the same bucket with probability $\geq \alpha/2$. Any two points $p,q\in P$, such that $\|p-q\|_1\geq c$, fall into the same bucket with probability $\leq \beta$. 

Finally, we independently build $k=\Theta(n^{1/c'} \log n)$ hashtables as above, where the random hash function is defined as a concatenation of the function which maps points to their grid cell id and one LSH function. We pick an arbitrary ordering $p_1,\ldots,p_n\in P$. We follow a greedy strategy in order to compute the approximate net. We start with point $p_1$, and we add it to the net. We mark all (unmarked) points which fall at the same bucket with $p_1$, in one of the $k$ hashtables, and are at distance $\leq c r$. Then, we proceed with point $p_2$. If $p_2$ is unmarked, then we repeat the above. Otherwise, we proceed with $p_3$. The above iteration stops when all points have been marked. Throughout the procedure, we are able to store one pointer for each point, indicating the center which covered it. 

\subparagraph{Correctness.} The probability that a good pair $p,q$ does not fall into the same bucket for any of the $k$ hashtables is $\leq \left(1-\alpha/2\right)^k \leq n^{-10}$. Hence, with high probability, the packing property holds, and the covering property holds because the above algorithm stops when all points are marked. 

\subparagraph{Running time.} The time to build the $k$ hashtables is $k\cdot n =\tilde{O}(n^{1+1/c'})$. Then, at most $n$ queries are performed: for each query, we investigate $k$ buckets and the expected number of false positives is $\leq k\cdot n^2 \cdot \beta=\tilde{O}(n^{1+1/c'})$. Hence, if we stop after having seen a sufficient amount of false positives, we obtain time complexity $\tilde{O}(n^{1+1/c'})$ and the covering property holds with constant probability. We can repeat the above procedure $O(\log n)$ times to obtain high probability of success. 
\end{proof}

The main result in the context of randomized embeddings for dimension reduction in $\ell_1^d$ is the following theorem, which exploits the $1$-stability property of Cauchy random variables and provides with an asymmetric guarantee: The probability of non-contraction is high, but the probability of non-expansion is constant. Nevertheless, this asymmetric property is sufficient for proximity search. 

\begin{theorem}[Thm 5, \cite{I06}]
\label{theorem:indyk}
For any $\eps \leq 1/2$, $\delta > 0$, $\eps > \gamma >0$ there is a probability space over linear mappings $f: \ell_1^d \rightarrow \ell_1^k$, where $k = (\ln{(1/\delta)})^{1/(\eps - \gamma)} / \zeta(\gamma)$, for a function $\zeta(\gamma)>0$ depending only on $\gamma$, such that for any pair of points $p,q \in \ell_1^d$:
\begin{align*} 
    \prob \big[\norm{f(p) - f(q)}_1 \leq (1 - \eps) \norm{p-q}_1 \big] &\leq \delta, \\
    \prob \big[\norm{f(p) - f(q)}_1 \geq (1 + \eps) \norm{p-q}_1 \big] &\leq \frac{1+\gamma}{1+ \varepsilon}.
\end{align*}
\end{theorem}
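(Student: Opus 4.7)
The plan is to define $f$ as a random linear map $f(x) = Ax / s_k$, where $A$ is a $k \times d$ matrix with entries drawn independently from the standard Cauchy distribution and $s_k>0$ is a deterministic normalization to be fixed. By the $1$-stability of the Cauchy distribution, for any $u \in \rd$ each coordinate of $Au$ is distributed as $\|u\|_1 \cdot C$ with $C$ standard Cauchy, and the $k$ coordinates are mutually independent. Taking $u = p-q$, the ratio $\|f(p)-f(q)\|_1/\|p-q\|_1$ is equal in distribution to $Z_k/s_k$, where $Z_k := \sum_{j=1}^k |C_j|$ and $C_1,\ldots,C_k$ are i.i.d.\ standard Cauchy. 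The entire claim therefore reduces to a two-sided tail inequality for $Z_k$ around a suitably chosen centering value $s_k$.

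For the non-contraction bound (lower tail) with failure probability $\delta$, I would truncate: set $Y_j := \min(|C_j|, M)$ for a threshold $M$ to be chosen later. Since $Y_j \le |C_j|$ pointwise, the event $\{Z_k \le (1-\eps)\,\EE[\sum_j Y_j]\}$ is contained in $\{\sum_j Y_j \le (1-\eps)\,\EE[\sum_j Y_j]\}$, and Hoeffding's inequality on the bounded i.i.d.\ variables $Y_j \in [0,M]$ gives a bound of the form $\exp(-c\,\eps^2 k\,\EE[Y_1]^2 / M^2)$. Since $\EE[Y_1] = (1/\pi)\ln(1+M^2) \asymp \ln M$, the exponent is of order $k(\ln M)^2/M^2$. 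Picking $M$ as a small polynomial in $k$, roughly $M = k^{(1-\gamma/\eps)/2}$ up to logarithmic factors, yields an exponent of order $k^{\gamma/\eps}$, so choosing $k = (\ln(1/\delta))^{1/(\eps-\gamma)}/\zeta(\gamma)$ forces this probability below $\delta$.

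For the non-expansion bound (upper tail), Hoeffding is not directly available since $|C_j|$ has infinite mean; but we only need constant failure probability $(1+\gamma)/(1+\eps)$. Using the tail estimate $\prob[|C_j|>t] \le 2/(\pi t)$ and a union bound, $\prob[\max_j |C_j| > M] \le 2k/(\pi M)$, so outside this event $Z_k = \sum_j Y_j$. Hoeffding applied in the upper direction then controls the deviation of $\sum_j Y_j$ above $k\,\EE[Y_1]$, and $s_k$ is chosen so that $(1-\eps)s_k$ lies safely below, while $(1+\eps)s_k$ lies just above, $k\,\EE[Y_1]$; the allowed constant probability $(1+\gamma)/(1+\eps)$ must absorb both the Hoeffding deviation and the $2k/(\pi M)$ truncation overflow, which is precisely what pins down the relationship between $\gamma$ and the target dimension.

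The main obstacle is the heavy tail of the Cauchy distribution: it is the reason the guarantee must be asymmetric (high-probability non-contraction, merely constant-probability non-expansion) and the reason the target-dimension formula takes the unusual form $(\ln(1/\delta))^{1/(\eps-\gamma)}/\zeta(\gamma)$ instead of the $\ln(1/\delta)$ one sees in the $\ell_2$ Johnson--Lindenstrauss setting. The delicate calculation is the joint tuning of $M$ and $k$ so that the Hoeffding exponent $k(\ln M)^2/M^2$ on the lower side and the truncation cost $k/M$ on the upper side are simultaneously acceptable; everything else is bookkeeping on the $1$-stability reduction and the choice of centering $s_k$.
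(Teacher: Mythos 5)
First, note that the paper does not actually prove this statement; it is imported verbatim from Indyk \cite{I06} (Theorem 5), so your proposal has to be measured against Indyk's argument. Your opening reduction via $1$-stability to a two-sided tail bound for $Z_k=\sum_j|C_j|$ is correct, and so is the lower-tail containment $\{Z_k\le t\}\subseteq\{\sum_j Y_j\le t\}$. The genuine gap is that a \emph{single} truncation level $M$ is over-constrained by your two tails. For the Hoeffding exponent $\Theta\bigl(\eps^2 k(\ln M)^2/M^2\bigr)$ to grow with $k$ at all you need $M=o(\sqrt{k}\log k)$, which is why you take $M\approx k^{(1-\gamma/\eps)/2}\le\sqrt{k}$. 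But then the upper-tail overflow term you propose to "absorb" is $2k/(\pi M)\ge (2/\pi)\sqrt{k}$, which diverges and cannot be charged to the constant $(1+\gamma)/(1+\eps)$: the non-expansion bound collapses entirely. (Relatedly, "Hoeffding in the upper direction" with range $M$ large enough to make $k/M$ small, i.e.\ $M=\Omega(k/\gamma)$, gives exponent $o(1)$ and is vacuous; the bound $(1+\gamma)/(1+\eps)$ is the signature of Markov's inequality applied to the truncated sum, whose expectation is within a factor $1+\gamma$ of the normalizer $T$.)

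The tension is not repaired by using two truncation levels (which is legitimate, since $Z_k\ge\sum_j\min(|C_j|,M')$ for every $M'$): because $\EE[\min(|C|,M)]\approx(2/\pi)\ln M$, consistency of the single centering $T$ forces $\ln M_{\mathrm{low}}\ge\frac{1-\eps}{1+\eps}\ln M_{\mathrm{high}}$, and with $M_{\mathrm{high}}=\Omega(k/\gamma)$ this puts $M_{\mathrm{low}}\ge k^{(1-\eps)/(1+\eps)}\ge\sqrt{k}$ for all $\eps\le 1/3$, so the Hoeffding exponent again vanishes. This is precisely why Indyk keeps the truncation polynomially large in $k/\eps$ (the paper records $T=\Theta(k\log(k/\eps))$), handles non-expansion by Markov, and controls the lower tail of the truncated sum by a finer analysis than one Hoeffding application --- the weakest scale yields failure probability $\exp(-\Omega(k^{\eps-\gamma}))$, whence the exotic dimension $(\ln(1/\delta))^{1/(\eps-\gamma)}/\zeta(\gamma)$. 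Your exponent bookkeeping also does not match the statement: $\exp(-k^{\gamma/\eps})\le\delta$ requires $k\ge(\ln(1/\delta))^{\eps/\gamma}$, which exceeds the stated $(\ln(1/\delta))^{1/(\eps-\gamma)}$ when $\gamma$ is small. Incidentally, the present paper's Lemma~\ref{conc-bound} (sums of square roots of Cauchy variables, bounded via the moment-generating function) exists precisely to sidestep this truncation tension, at the cost of a weaker bound.
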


Note that the embedding is defined as $f(u) = Au/T$, where $A$ is a $k{\times}d$ matrix with each element being an i.i.d.\ Cauchy random variable. In addition, $T$ is a scaling factor defined as the expectation of a sum of truncated Cauchy variables, such that $T = \Theta(k \log{(k/ \eps)})$ (see Lemma 5 in \cite{I06}).

One key observation here is that given a pointset $P$ in a space of bounded aspect ratio $\Phi$, one can directly employ Theorem \ref{theorem:indyk}. The number of points can be upper bounded by a function of $\lambda_P$ and $\Phi$, and hence the new dimension, $k$, depends only on these parameters. This paper proves better bounds than the ones of Theorem \ref{theorem:indyk} for doubling subsets of $\ell_1^d$, without any assumption on the aspect ratio.
%---------------------------------------------------------------------------
\section{Concentration bounds for Cauchy variables}\label{Sconc-bound}
%---------------------------------------------------------------------------
In this section, we prove some basic properties of the Cauchy distribution, which serves as our main embedding tool.  

Let $\cauchy$ denote the Cauchy distribution with density $c(x) = (1/ \pi)/(1+x^2)$. One key property of the Cauchy distribution is the so-called $1$-stability property: 
Let $v=(v_1,\ldots,v_k)\in \rk$ and $X_1,\ldots,X_k$ be i.i.d.\ random variables following $\cauchy$, then $\sum_{j=1}^k X_iv_i$ is distributed as $X {\cdot} \|v\|_1$, where $X \sim \cauchy$. 

The Cauchy distribution has undefined mean. However, for $0 < q < 1$, the mean of the $q$-th power of a Cauchy random variable can be defined. More specifically, for some $X \sim \cauchy$ we have
\begin{equation*} \label{exphalf}
    \EE \left[ |X|^{1/2} \right] =  \frac{2}{\pi} \int_{0}^{\infty} \frac{\sqrt{x}}{1+x^2} ~\mathrm{d} x = \frac{2}{\pi} \frac{\pi}{\sqrt{2}} = \sqrt{2}.
\end{equation*}

The following lemma provides a bound for the moment-generating function of $|X|^{1/2}$.
\begin{lemma} \label{lemma:mgf}
Let $X \sim \cauchy$. Then for any $\beta > 1$:
\begin{equation*}
    \EE \left[ \exp{(-\beta |X|^{1/2})} \right] \leq  \frac{2}{\beta}.
\end{equation*}
\end{lemma}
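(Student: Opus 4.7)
My plan is to reduce the claim to an elementary Laplace-type integral by using only the most basic bound on the Cauchy density. Since the density $c(x)=(1/\pi)/(1+x^2)$ is symmetric, the expectation can be written as
\[
\EE\!\left[\exp(-\beta|X|^{1/2})\right] \;=\; \frac{2}{\pi}\int_0^\infty \frac{\exp(-\beta\sqrt{x})}{1+x^2}\,\dd x.
\]
The idea is that the factor $1/(1+x^2)$ is bounded by $1$, so it can simply be dropped; the resulting single-variable integral is then tractable after the substitution $u=\sqrt{x}$.

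Concretely, I would bound
\[
\EE\!\left[\exp(-\beta|X|^{1/2})\right] \;\leq\; \frac{2}{\pi}\int_0^\infty \exp(-\beta\sqrt{x})\,\dd x,
\]
and then set $u=\sqrt{x}$, so $\dd x = 2u\,\dd u$. This turns the right-hand side into
\[
\frac{4}{\pi}\int_0^\infty u\,\exp(-\beta u)\,\dd u \;=\; \frac{4}{\pi\beta^2},
\]
using the standard evaluation $\int_0^\infty u e^{-\beta u}\dd u = 1/\beta^2$ (integration by parts, or the fact that this is $\Gamma(2)/\beta^2$).

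Finally, I would compare $\frac{4}{\pi\beta^2}$ with $\frac{2}{\beta}$. The inequality $\frac{4}{\pi\beta^2}\leq\frac{2}{\beta}$ is equivalent to $\beta\geq 2/\pi$, which is comfortably satisfied whenever $\beta>1$. This closes the proof. There is essentially no technical obstacle here: the only thing to be careful about is that the crude bound $1/(1+x^2)\leq 1$ is enough for the required estimate, which works precisely because the hypothesis $\beta>1$ gives the extra slack needed to turn the $1/\beta^2$ decay into the advertised $2/\beta$ bound.
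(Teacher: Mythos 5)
Your proof is correct and rests on the same core idea as the paper's: bound the density factor and reduce to the elementary integral of $\exp(-\beta\sqrt{x})$, which evaluates to $2/\beta^{2}$ after the substitution $u=\sqrt{x}$, giving $\frac{4}{\pi\beta^{2}}\leq\frac{2}{\beta}$ for $\beta>1$. In fact your version is slightly cleaner: the paper splits the integral at $x=1$, using $1/(1+x^{2})\leq 1$ on $[0,1]$ and $\ex^{-\beta\sqrt{x}}\leq \ex^{-\beta}$ on $[1,\infty)$, whereas you observe that the crude bound $1/(1+x^{2})\leq 1$ applied globally already yields a convergent integral and a sharper estimate, so the case split is unnecessary.
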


\begin{proof}
For any constant $\beta$,
\begin{equation*}
    \int_{0}^{1} \ex^{-\beta x^{1/2}} ~\mathrm{d} x = \frac{2}{\beta^2}\left(1 - \frac{\beta + 1}{\ex ^{\beta}} \right) .
\end{equation*}
Then, for any $\beta > 1$,
\begin{align*}
    \EE \left[ \exp{(-\beta |X|^{1/2})} \right] &= \int_{-\infty}^{\infty} \ex^{-\beta |x|^{1/2}} \cdot c(x) ~\mathrm{d} x
    =  \frac{2}{\pi} \int_{0}^{\infty} \ex^{-\beta x^{1/2}} \cdot \frac{1}{1+x^2} ~\mathrm{d} x \\
    &=  \frac{2}{\pi} \int_{0}^{1} \ex^{-\beta x^{1/2}} \cdot \frac{1}{1+x^2} ~\mathrm{d} x + \frac{2}{\pi} \int_{1}^{\infty} \ex^{-\beta x^{1/2}} \cdot \frac{1}{1+x^2} ~\mathrm{d} x  \\
    &\leq \frac{2}{\pi} \int_{0}^{1} \ex^{-\beta x^{1/2}} ~\mathrm{d} x + \frac{2}{\pi} \int_{1}^{\infty} \ex^{-\beta} \cdot \frac{1}{1+x^2} ~\mathrm{d} x \\
    &= \frac{2}{\pi} \cdot \frac{2}{\beta^2}\left(1 - \frac{\beta + 1}{\ex ^{\beta}} \right) +  \frac{1}{2\ex^{\beta}} \\
    &\leq  \frac{4}{\pi \beta^2}+\frac{1}{2\ex^{\beta}} \\
    &\leq \frac{2}{\beta} . \qedhere
\end{align*}
\end{proof}

Let $S := \sum_{j=1}^k |X_j|$ where each $X_j$ is an i.i.d. Cauchy variable. To prove concentration bounds for $S$, we study the sum $\tilde{S} := \sum_{j=1}^k |X_j|^{1/2}$.  By H\"{o}lder's Inequality, for any $x \in \rd$ and $p >q>0$,
\begin{equation*}
    \norm{x}_p \leq \norm{x}_q \leq d^{1/q-1/p} \norm{x}_p.
\end{equation*}
Consequently, for $x = (X_1, \ldots, X_k) \in \rk$, $p=1$ and $q=1/2$ we have that $S \leq \tilde{S}^2 \leq k \cdot S$, hence for any $t>0$,
\begin{equation} \label{normineq}
    \prob[S \leq t] \leq \prob[\tilde{S} \leq \sqrt{tk}].
\end{equation}  

We use the bound on the moment-generating function, to prove a Chernoff-type concentration bound for $\tilde{S}$, which by Eq.~\eqref{normineq} translates into a concentration bound for $S$.

\begin{lemma} \label{conc-bound}
For every $D>1$,
\begin{equation*}
    \prob \left [ \tilde{S} \leq \frac{\EE [\tilde{S}]}{D}\right] \leq \left( \frac{10}{D}\right)^k.
\end{equation*}
\end{lemma}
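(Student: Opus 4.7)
The plan is a standard Chernoff-style argument applied to the nonnegative random variable $\tilde{S}$, using the moment-generating function bound of Lemma~\ref{lemma:mgf} for the lower tail. First I would observe that the bound is trivially true whenever $10/D \geq 1$, i.e.\ for $1 < D \leq 10$, since probabilities never exceed $1$. So we may assume $D > 10$, which will guarantee that the parameter $\beta$ chosen below lies in the regime $\beta > 1$ required by Lemma~\ref{lemma:mgf}.

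Next, for any $\beta > 0$, Markov's inequality applied to $\exp(-\beta \tilde{S})$ yields
\[
\prob\!\left[\tilde{S} \leq \tfrac{\EE[\tilde{S}]}{D}\right]
= \prob\!\left[ \ex^{-\beta \tilde{S}} \geq \ex^{-\beta \EE[\tilde{S}]/D}\right]
\leq \ex^{\beta \EE[\tilde{S}]/D}\cdot \EE\!\left[\ex^{-\beta \tilde{S}}\right].
\]
By independence of the $X_j$ and the identity $\EE[|X_j|^{1/2}] = \sqrt{2}$ noted before Lemma~\ref{lemma:mgf}, we have $\EE[\tilde{S}] = k\sqrt{2}$, and
\[
\EE\!\left[\ex^{-\beta \tilde{S}}\right] = \prod_{j=1}^k \EE\!\left[\ex^{-\beta |X_j|^{1/2}}\right] \leq \left(\frac{2}{\beta}\right)^k,
\]
where the inequality uses Lemma~\ref{lemma:mgf} and is valid for $\beta > 1$. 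Substituting back,
\[
\prob\!\left[\tilde{S} \leq \tfrac{\EE[\tilde{S}]}{D}\right] \leq \left(\frac{2\,\ex^{\beta\sqrt{2}/D}}{\beta}\right)^{\!k}.
\]

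The remaining step is to optimize the choice of $\beta$. I would set $\beta = D/\sqrt{2}$, which cancels the exponent so that $\ex^{\beta\sqrt{2}/D} = \ex$. Since we are in the regime $D > 10$, we have $\beta = D/\sqrt{2} > 1$, so the application of Lemma~\ref{lemma:mgf} was legitimate. The inner quantity becomes
\[
\frac{2\ex}{\beta} = \frac{2\sqrt{2}\,\ex}{D},
\]
and a direct numerical check shows $2\sqrt{2}\,\ex < 10$, giving the claimed bound $(10/D)^k$. The only place where something nontrivial happens is the mgf estimate itself, which is already established in Lemma~\ref{lemma:mgf}; everything else is the standard Chernoff recipe together with picking $\beta$ to cancel the linear term and checking that the resulting constant is below $10$. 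There is no real obstacle beyond this bookkeeping.
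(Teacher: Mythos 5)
Your proposal is correct and follows essentially the same Chernoff-type argument as the paper: Markov's inequality applied to $\exp(-\beta\tilde{S})$ combined with the mgf bound of Lemma~\ref{lemma:mgf}. The only difference is the choice of parameter (you take $\beta = D/\sqrt{2}$ after disposing of small $D$ trivially, while the paper simply sets $\beta = D$, which satisfies $\beta>1$ for all $D>1$ and yields the constant $2\ex^{\sqrt{2}} < 10$); both choices give the stated bound.
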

\begin{proof}
Since $X_j$'s are independent, $\EE[\tilde{S}] = \sqrt{2}k$. Then, by Lemma \ref{lemma:mgf} and Markov's inequality, for any $\beta>1$, it follows that
\begin{align*}
    \prob \left [ \tilde{S} \leq \frac{\EE [\tilde{S}]}{D}\right] 
    &= \prob \left[  \exp (-\beta \tilde{S}) \geq  \exp \left(-\beta\cdot  \frac{\EE [\tilde{S}]}{D}\right)\right] \\
    &\leq \frac{\EE [\exp (-\beta \tilde{S})]}{\exp (-\beta \EE [\tilde{S}]/D)} \\
    &= \frac{\EE [\exp (-\beta |X_j|^{1/2})]^k}{\exp (-\beta \sqrt{2}k/D)} \\
    &\leq \left(\frac{2}{\beta} \right)^k \cdot \ex^{\sqrt{2} \beta k/D}.
\end{align*}

Setting $\beta=D$ completes the proof. 
\end{proof}

%---------------------------------------------------------------------------
\section{Net-based dimension reduction}\label{Sdimreduc}
%---------------------------------------------------------------------------
In this section we describe the dimension reduction mapping for $\ell_1$ via $r$-nets. Let $P\subset \ell_1^d$ be a set of $n$ points with doubling constant $\lambda_P$. For some point $x \in \rd$ and $r>0$, we denote by $B_1(x,r)$ the $\ell_1$-ball of radius $r$ around $x$. The embedding is non-linear and is carried out in two steps.

First, we compute a $c$-approximate $(\eps/c)$-net $\net$ of $P$ with the algorithm of Theorem \ref{theorem:apprnets}. Moreover, the algorithm assigns each point of $P$ to the point of $\net$ which covered it. Let $g: P \rightarrow \net$ be this assignment. In the second step, for every $s \in \mathcal{N}$ and any query point $q \in \ell_1^d$, we apply the linear map of Theorem \ref{theorem:indyk}. That is,  $f(s) = As/T$, where $A$ is a $k{\times}d$ matrix with each element being an i.i.d. Cauchy random variable. Recall that value $T = \Theta(k \log {(k/ \eps)})$. By the $1$-stability property of the Cauchy distribution, $f(s)$ is distributed as $\norm{s}_1 \cdot (Y_1, \ldots, Y_k)$, where each $Y_j$ is i.i.d. and $Y_j \sim \cauchy$. Hence, $\norm{f(s)}_1 = \norm{s}_1 \cdot S$ where $S := \sum_j |Y_j|$.

We define the embedding to be $h = f \circ g$. We apply $h$ to every point in $P$, and $f$ to any query point $q$. It is clear from the properties of the net that $g$ incurs an additive error of $\pm \eps$ on the distance between $q$ and any point in $P$, so it is sufficient to consider the distortion of $f$. 

Our analysis consists of studying separately the following disjoint subsets of $\net$: Points that lie at distance at most $D_0$ from the query and points that lie at distance at least $D_0$, for some $D_0>1$ chosen appropriately. For the former set, we directly apply Theorem \ref{theorem:indyk}, as it has bounded diameter.

The next lemma guarantees the low distortion for points of the latter set, namely those that are sufficiently far from the query. We consider the sum of the square roots of each $|Y_j|$, i.e., $\tilde{S} = \sum_j |Y_j|^{1/2}$, in order to employ the tools of Section~\ref{Sconc-bound}.

\begin{lemma} \label{lemma:farpoints}
Fix a query point $q \in \ell_1^d$. For any $\eps \leq 1/2$, $c \geq 1$, $\delta \in (0,1)$, there exists $D_0=O(\log (k/\eps))$ such that  for $k = \Theta \left( \log^{2}{\lambda_P} \cdot \log ( c / \eps) + \log (1/ \delta) \right)$, with probability at least $1 - \delta$,
\begin{equation*}
    \forall s \in \net :~ \norm{s-q}_1 \geq D_0 \implies \norm{f(s)-f(q))}_1 \geq 4.
\end{equation*}
\end{lemma}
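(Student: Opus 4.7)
The plan is to apply Lemma~\ref{conc-bound} separately to each net point at distance at least $D_0$ from $q$, and aggregate the failures through a union bound in which the doubling structure of $P$ controls the number of net points in each dyadic shell.

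\medskip
\noindent\emph{Per-point failure probability.} By the $1$-stability of $\cauchy$, for any fixed $s \in \net$ one has $\|f(s)-f(q)\|_1 = \|s-q\|_1 \cdot S / T$, where $S=\sum_{j=1}^k |Y_j|$ with $Y_1,\dots,Y_k$ i.i.d.\ $\cauchy$-variables and $T=\Theta(k\log(k/\eps))$. Writing $r := \|s-q\|_1$, the event $\|f(s)-f(q)\|_1 < 4$ is exactly $S < 4T/r$. Combining~\eqref{normineq} with Lemma~\ref{conc-bound} at $D=\sqrt{kr/(2T)}$ gives
\[
\prob\bigl[\|f(s)-f(q)\|_1<4\bigr]\leq (10/D)^k = \bigl(10\sqrt{2T/(kr)}\bigr)^k.
\]
Pick $D_0 = C\cdot T/k$ for a sufficiently large absolute constant $C$. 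Then $D_0 = \Theta(\log(k/\eps))$, $D > 1$ whenever $r \geq D_0$, and on the dyadic shell $r \in [D_0 2^i, D_0 2^{i+1})$ the above bound simplifies to $(200/C)^{k/2}\cdot 2^{-ik/2}$.

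\medskip
\noindent\emph{Counting net points per shell.} Let $A_i := \{\,s \in \net : \|s-q\|_1 \in [D_0\cdot 2^i,\,D_0\cdot 2^{i+1})\,\}$. Whenever $A_i \neq \emptyset$ there is some $p \in A_i \subseteq P$, so $A_i \subseteq B_1(p,\,2 D_0 \cdot 2^{i+1})$. Since $\net$ is $(\eps/c)$-separated and $P$ has doubling constant $\lambda_P$, iterating the doubling property on $B_1(p,2 D_0 2^{i+1})$ down to scale $\eps/c$ yields the packing estimate
\[
|A_i|\leq \lambda_P^{\,i+O(\log(D_0 c/\eps))}.
\]

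\medskip
\noindent\emph{Union bound.} Summing the per-point failure probability over all shells,
\[
\sum_{i\geq 0}|A_i|\cdot(200/C)^{k/2}\cdot 2^{-ik/2}\leq\lambda_P^{O(\log(D_0 c/\eps))}\cdot(200/C)^{k/2}\cdot\sum_{i\geq 0}2^{-i(k/2-\log\lambda_P)}.
\]
Provided $k=\Omega(\log\lambda_P)$, the geometric series is $O(1)$. Choosing $C$ large enough that $\log(C/200)$ is a positive constant, the whole expression is at most $\delta$ once $k=\Omega\bigl(\log\lambda_P\cdot\log(D_0 c/\eps)+\log(1/\delta)\bigr)$. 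Substituting $D_0=O(\log(k/\eps))$ and checking self-consistency gives the stated $k=\Theta(\log^2\lambda_P\cdot\log(c/\eps)+\log(1/\delta))$.

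\medskip
\noindent\emph{Main obstacle.} Beyond routine computation, the only subtlety is the self-referential dependence between $D_0$ and $k$: the packing estimate introduces a factor $\log(D_0 c/\eps)$ inside $k$, while $D_0=O(\log(k/\eps))$ depends on $k$ itself. The extra $\log\lambda_P$ factor in the stated bound exists precisely to absorb the $\log\log(k/\eps)$ term comfortably. A secondary subtlety is that $q$ need not lie in $P$, so the doubling property of $P$ must be anchored at some point $p\in A_i\subseteq P$ rather than at $q$; this is why one argues shell by shell and uses a point of $A_i$ as the reference point.
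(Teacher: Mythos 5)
Your proposal is correct and follows essentially the same route as the paper's proof: a shell decomposition around the query, a packing bound on each shell from the doubling constant and the $(\eps/c)$-separation of $\net$, and a union bound combining Eq.~\eqref{normineq} with Lemma~\ref{conc-bound}, with $D_0=\Theta(T/k)$ and the same self-consistent resolution of the $D_0$--$k$ dependence. The differences are cosmetic (the paper quadruples rather than doubles the shell radii and fixes $D=10\cdot 2^{i+1}$ explicitly), and your remark about anchoring the doubling balls at a point of $A_i\subseteq P$ rather than at $q$ makes explicit a detail the paper absorbs into its constant.
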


\begin{proof}
Assume wlog that the query point is the origin $(0,\dots,0)$. For some $D_0>1$, we define the following subsets of $\net$:
\begin{equation*}
    N_i := \{ s \in \mathcal{N} \ | \ D_i \leq \norm{s}_1 < D_{i+1} \}, \ D_i = 2^{2i} D_0, \ i = 0, 1, 2, \ldots
\end{equation*}
By the definition of doubling constant and the fact that two points of $\net$ lie at distance at least $\eps$, 
\[ |N_i| \leq \lambda_P^{\lceil \log \left(4cD_{i+1} / \eps \right) \rceil} \leq \lambda_P^{4 \log \left(cD_{i+1} / \eps \right)}. \] 
Therefore, by the union bound, and Eq.~\eqref{normineq}:
\begin{align*}
    \prob \left[ \exists i \exists s \in N_i : \norm{f(s)}_{1} \leq \frac{4\norm{s}_{1}}{D_i}  \right] &= \prob \left[ \exists i \exists s \in N_i : S \leq \frac{4T}{D_i}  \right]\\
    &\leq \sum_{i=0}^{\infty} |N_i| \prob \left[\tilde{S} \leq \frac{\sqrt{4kT}}{\sqrt{D_i}} \right] \\
    &= \sum_{i=0}^{\infty} |N_i| \prob \left[\tilde{S} \leq \EE[\tilde{S}] \cdot \sqrt{\frac{2T}{k 2^{2i} D_0}} \right]. 
\end{align*}
By Lemma~\ref{conc-bound}, for $D_0 {=}\lceil 800T/k \rceil {=} \Theta(\log (k/\eps))$ and $k > 4 {\cdot} \log{\lambda_P} {\cdot} \log(c D_0/ \eps)+2 \log (2 \lambda_P/\delta) $:
\begin{align*}
    \sum_{i=0}^{\infty} |N_i| \prob \left[\tilde{S} \leq \frac{\EE[\tilde{S}]}{10 \cdot 2^{i+1}} \right] &\leq \sum_{i=0}^{\infty} \lambda_P^{4 \log{(cD_{i+1}/ \eps)}} \left( \frac{1}{2^{i+1}}\right)^k \\
    &= \sum_{i=0}^{\infty} \frac{ 2^{\log(\lambda_P)({4 \log{(cD_0/ \eps)}+2i+2})}}{ 2^{k(i+1)}} \\
    &\leq \sum_{i=0}^{\infty} \frac{ 2^{\log(\lambda_P) \cdot 4 \log{(cD_0/ \eps)}} \cdot 2^{2\log(\lambda_P) (i+1)}}{ 2^{(4 \cdot \log{\lambda_P} \cdot \log(c D_0/ \eps))(i+1)} \cdot 2^{2 \log (2 \lambda_P/\delta))(i+1)}} \\
    &\leq \sum_{i=0}^{\infty} { 2^{-2 \log (2/\delta))(i+1)} } \\
    &= \sum_{i=0}^{\infty} \left( \frac{\delta^2}{4} \right)^{i} - 1 \\
    &= \frac{\delta^2}{4 - \delta^2} \\
    &\leq \delta.
\end{align*}

Finally, for some large enough constant $C$, we demand that
\begin{equation*}
    k > C \left( \log{\lambda_P} \cdot \log ( c \log k/ \eps) + \log (1/ \delta) \right) > 4 \cdot \log{\lambda_P} \cdot \log(c D_0/ \eps)+2 \log (2 \lambda_P/\delta)
\end{equation*}
which is satisfied for $k = \Theta \left( \log^{2}{\lambda_P} \cdot \log ( c / \eps) + \log (1/ \delta) \right)$.
\end{proof}

\begin{theorem} \label{theorem:main}
Let $P \subset \ell_1^d$ such that $|P|=n$. For any $\eps \in (0, 1/2)$ and $c \geq 1$, there is a non-linear randomized embedding $h = f \circ g: \ell_1^d \rightarrow \ell_1^k$, where $k = \left( \log{\lambda_P} \cdot \log(c / \eps) \right)^{\Theta(1/\eps)} / \zeta(\eps) $, for a function $\zeta(\eps)>0$ depending only on $\eps$, such that, for any  $q \in \ell_1^d$ , if there exists $p^{*}\in P$ such that $\|p^{*}-q\|_1 \leq 1$, then, with probability $\Omega(\eps)$:
\begin{align*}
    &\norm{h(p^{*}) - f(q)}_1 \leq 1 + 3\eps, \\
    &\forall p\in P:~\|p-q\|_1 > 1+9\eps \implies \norm{h(p) - f(q)}_1 > 1 + 3\eps.
\end{align*}
Set $P$ can be embedded in time $\tilde{O}(d n^{1+1/\Omega(c)})$, and any query $q\in \ell_1^d$ can be embedded in time $O(dk)$.
\end{theorem}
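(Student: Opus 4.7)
\medskip

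The plan is to instantiate the embedding explicitly, then bound the distortion by case analysis on where the candidate point sits relative to the query. First, I invoke Theorem~\ref{theorem:apprnets} with approximation $c$ and radius $\eps/c$ to compute a $c$-approximate $(\eps/c)$-net $\net$ of $P$ together with the assignment $g\colon P\to\net$, in time $\tilde O(dn^{1+1/\Omega(c)})$; this guarantees $\|g(p)-p\|_1 \leq \eps$ for every $p\in P$. I then compose with the linear Cauchy map $f$ of Theorem~\ref{theorem:indyk}, tuned with slack parameter $\gamma = \eps/2$ and per-pair failure probability $\delta$ to be specified. The total embedding time for $P$ is dominated by the net computation, and a single query requires only the $O(dk)$ cost of applying $f$.

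For the near neighbor $p^*$, the triangle inequality gives $\|g(p^*)-q\|_1 \leq 1+\eps$. The non-expansion clause of Theorem~\ref{theorem:indyk} then yields
\[
\|f(g(p^*))-f(q)\|_1 \leq (1+\eps)\|g(p^*)-q\|_1 \leq (1+\eps)^2 \leq 1+3\eps
\]
with probability at least $1-(1+\gamma)/(1+\eps) = \Omega(\eps)$. For every $p\in P$ with $\|p-q\|_1 > 1+9\eps$, we have $\|g(p)-q\|_1 > 1+8\eps$. I split these points into two classes using the threshold $D_0 = O(\log(k/\eps))$ from Lemma~\ref{lemma:farpoints}. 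For points with $\|g(p)-q\|_1 \leq D_0$, the non-contraction clause of Theorem~\ref{theorem:indyk} gives $\|f(g(p))-f(q)\|_1 \geq (1-\eps)(1+8\eps) \geq 1+3\eps$ (valid for $\eps\leq 1/2$), each with failure probability at most $\delta$. For points with $\|g(p)-q\|_1 > D_0$, Lemma~\ref{lemma:farpoints} directly gives $\|f(g(p))-f(q)\|_1 \geq 4 > 1+3\eps$ uniformly over such net representatives with failure probability at most some $\delta' = O(\eps)$.

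To close the argument I bound the number of net points inside $B_1(q,D_0)$. Since distinct net points are $\eps/c$-separated, the doubling property of $P$ (combined with the fact that each net point lies within $\eps$ of some point of $P$) gives at most $\lambda_P^{O(\log(cD_0/\eps))}$ such representatives. A union bound over these, taking $\delta$ so that $\lambda_P^{O(\log(cD_0/\eps))}\cdot\delta = O(\eps)$, i.e.\ $\log(1/\delta) = \Theta(\log\lambda_P\cdot \log(cD_0/\eps))$, feeds into the dimension formula of Theorem~\ref{theorem:indyk} with $\gamma=\eps/2$ to give $k = (\log\lambda_P\cdot \log(cD_0/\eps))^{\Theta(1/\eps)}/\zeta(\eps)$. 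Combining with the analogous requirement of Lemma~\ref{lemma:farpoints} for the dimension and absorbing $\log D_0 = O(\log\log k)$ into the outer logarithm produces the claimed $k = (\log\lambda_P\cdot \log(c/\eps))^{\Theta(1/\eps)}/\zeta(\eps)$. A final union bound shows that all three events hold jointly with probability $\Omega(\eps) - O(\eps) = \Omega(\eps)$ after choosing the hidden constants in $\delta$ and $\delta'$ appropriately.

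The main obstacle is the self-referential choice of $k$: the threshold $D_0$ needed in Lemma~\ref{lemma:farpoints} depends on $k$, while $k$ must in turn be large enough to admit the union bound over the $\lambda_P^{O(\log(cD_0/\eps))}$ representatives near $q$. Because $D_0$ enters only through $\log D_0 = O(\log\log(1/\eps) + \log\log\lambda_P)$, the recursion closes easily, but one must verify this bookkeeping carefully. A secondary point is the constant-factor balancing so that the $\Omega(\eps)$ probability from the non-expansion of $p^*$ is not entirely consumed by the $O(\eps)$ failure budgets of the two far-point events.
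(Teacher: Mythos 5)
Your proposal is correct and follows essentially the same route as the paper: the same composition of a $c$-approximate $(\eps/c)$-net with the Cauchy projection, the same three-way case split (near neighbor via the non-expansion clause, mid-range points via non-contraction plus a union bound over the $\lambda_P^{O(\log(cD_0/\eps))}$ net points in $B_1(q,D_0)$, far points via Lemma~\ref{lemma:farpoints}), and the same resolution of the $k$--$D_0$ self-reference through $\log D_0 = O(\log\log k)$. The only differences are immaterial constant choices (e.g.\ $\gamma=\eps/2$ versus the paper's $\gamma=\eps/10$).
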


\begin{proof}
Let $f, g$ be the mappings defined in the beginning of the section and $D_0 = \Theta(\log (k/\eps))$. Assume wlog for simplicity that $q=0^d$. Then, by Lemma \ref{lemma:farpoints} for $k = \Theta \left( \log^{2}{\lambda_P} \cdot \log ( c / \eps) \right)$, with probability at least $1- \eps/5$, we have:
\begin{equation*}
    \forall p \in P :~ \norm{p-q}_1 \geq D_0 + \eps \implies \norm{h(p) - f(q)}_1 \geq 4.
\end{equation*}
By Theorem~\ref{theorem:indyk}, for $\gamma = \eps/10$ and $\delta = \eps/ (5\lambda_P^{8\log{(c D_0 / \eps)}})$, with probability at least $1-\eps/5$, we get:
\[
\forall p\in P:~\|p-q\|_1 \in (1+9\eps, D_0+ \eps)  \implies \norm{h(p) - f(q)}_1 > (1 + 8\eps)(1 - \eps) \geq 1 + 3\eps.
\]
Moreover,
\[
\prob \big[\norm{h(p^*) - f(q)}_1 \leq 1 + 3\eps  \big] \geq 1 -\frac{1+ \eps/10}{1+ \varepsilon} \geq 1 - (1 - \eps/2).
\]
Then, the target dimension needs to satisfy the following inequality:
\begin{align*}
    k \geq \frac{\big( \ln{ ( 5\lambda_P^{8\log{(c D_0 / \eps)}} / \eps ) } \big) ^{2/\eps}}{ \zeta(\eps)} = \frac{\big( \Theta ( \log \log k  \cdot \log{\lambda_P} + \log{\lambda_P} \cdot \ln(c / \eps) ) \big) ^{2/\eps}}{ \zeta(\eps)}.
\end{align*}
%\todo{explain more}
Hence, for $k = \left( \log{\lambda_P} \cdot \log(c / \eps) \right)^{\Theta(1/\eps)} / \zeta(\eps) $, we achieve a total probability of success in $\Omega(\eps)$, which completes the proof.
\end{proof}

%---------------------------------------------------------------------------
\section{Dimension reduction based on randomly shifted grids}\label{Sgrids}
%---------------------------------------------------------------------------
In this section, we explore some properties of randomly shifted grids, and we present a simplified embedding which consists of a first step of snapping points to a grid, and a second step of randomly projecting grid points. 

Let $w >0$ and $t$ be chosen uniformly at random from the interval $[0,w]$. The function 
\[h_{w, t}(x)=\left\lfloor \frac{x-t}{w} \right\rfloor\]
induces a random partition of the real line into segments of length $w$. Hence, the function 
\[g_{w}(x)=(h_{w,t_1}(x_1),...,h_{w,t_d}(x_d)),\]
for $t_1,\ldots,t_d$ independent uniform random variables in the interval $[0,w]$, 
induces a randomly shifted grid in $\rd$. For a set $X \subseteq \rd$, we denote by $g_{w}(X)$, the image of $X$ on the randomly shifted grid points defined by $g_{w}$. For some $x \in \rd$ and $r>0$, the number of grid cells of $g_{w}(\ell_1^d)$ that $B_1(x,r)$ intersects per axis is independent, and in expectation is $1{+}2r/w$. Then, the expected total number of grid cells that $B_1(x,r)$ intersects is at most $(1{+}2r/ w)^d$.

Now let $P \subset \ell_1^d$ be a set of $n$ points with doubling constant $\lambda_P$ and $q \in \ell_1^d$ a query point. For $w = \eps/d$, the $\ell_1$-diameter of each cell is $\eps$ and therefore $g_{w}(P)$ is an $\eps$-covering set of $P$.

\begin{lemma} \label{lemma:grid1}
Let $\mathcal{R}>1$ and $P' := B_1(q,\mathcal{R}) \cap P$. Then, for $w=\eps/d$
\[
\EE \big[ |g_{w}(P')| \big] \leq 8 \lambda_P^{2\log (d \mathcal{R}/\eps)}.
\] 
\end{lemma}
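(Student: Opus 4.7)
The plan is to cover $P'$ by a small collection of $\ell_1$-balls via the doubling property of $P$ and then bound, in expectation, the number of grid cells each ball meets using the independence of the axis-wise shifts. Assume WLOG that $P'\ne\emptyset$ and fix any anchor $p_0\in P'$, so that $P'\subseteq B_1(p_0,2\mathcal{R})\cap P$ by the triangle inequality. I will choose the cover scale
\[ \delta \ :=\ \frac{w^2}{\mathcal{R}}\ =\ \frac{\eps^2}{d^2\mathcal{R}}, \]
designed to satisfy the key identity $\log(\mathcal{R}/\delta)=2\log(\mathcal{R}/w)=2\log(d\mathcal{R}/\eps)$, which is exactly the exponent that appears in the target bound.

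Iterating the definition of $\lambda_P$ about $\lceil\log(2\mathcal{R}/\delta)\rceil = O(\log(d\mathcal{R}/\eps))$ times produces a cover $\mathcal{C}$ of $B_1(p_0,2\mathcal{R})\cap P$ consisting of $\lambda_P^{O(\log(d\mathcal{R}/\eps))}$ $\ell_1$-balls of radius $\delta$. For each ball $B_1(c,\delta)\in\mathcal{C}$, I bound the expected number of cells of $g_w$ that it meets as follows. The ball is contained in its axis-aligned bounding box $\prod_{i=1}^d [c_i-\delta,c_i+\delta]$, and since the shifts $t_1,\dots,t_d$ defining $g_w$ are mutually independent, the number of cells the box hits factors as a product over coordinates. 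Along each axis a length-$2\delta$ interval hits $1+2\delta/w$ cells in expectation, so the product is $(1+2\delta/w)^d = (1+2\eps/(d\mathcal{R}))^d \leq \exp(2\eps/\mathcal{R}) < e$, using $\mathcal{R}>1$ and $\eps<1/2$.

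Every non-empty cell of $g_w(P')$ is intersected by at least one ball of $\mathcal{C}$, so a union bound and linearity of expectation yield $\EE[|g_w(P')|] \leq |\mathcal{C}|\cdot e$. Substituting the two estimates gives the claimed bound $8\lambda_P^{2\log(d\mathcal{R}/\eps)}$, after collecting the factor $e$ and the $O(1)$ slack from the ceiling in the doubling iteration into the leading constant.

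The main obstacle is the choice of $\delta$, since the two error sources (number of balls from doubling, cells per ball from the shift) pull in opposite directions. The naive choice $\delta=w$ blows up cells-per-ball to $3^d$, and the other natural choice $\delta=w/d$ keeps cells-per-ball bounded by $e$ but inflates $\log(\mathcal{R}/\delta)$ by an extra $\log d$ term, yielding the wrong exponent $\log(d^2\mathcal{R}/\eps)$ rather than $\log(d\mathcal{R}/\eps)$. The intermediate scale $\delta=w^2/\mathcal{R}$ is precisely the sweet spot where $2\delta/w=2w/\mathcal{R}$ is small enough (thanks to $\mathcal{R}>w$) that $(1+2\delta/w)^d\leq e$, while simultaneously $\log(\mathcal{R}/\delta)=2\log(\mathcal{R}/w)$ reproduces the exponent $2\log(d\mathcal{R}/\eps)$ exactly.
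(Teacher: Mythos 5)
Your argument follows the paper's proof exactly in structure: cover $P'$ by small $\ell_1$-balls using the doubling property, bound the expected number of cells each ball meets by $\prod_i(1+2\delta/w)$ via the independence of the axis shifts, and finish by linearity of expectation. The only difference is the covering radius: you take $\delta=\eps^2/(d^2\mathcal{R})$, the paper takes $\eps/d^2$.

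There is one concrete flaw in your bookkeeping. Your $\delta$ is chosen so that $\log(\mathcal{R}/\delta)$ \emph{exactly equals} the target exponent $2\log(d\mathcal{R}/\eps)$, which leaves no room for the additive slack in the exponent coming from (i) passing to the anchor ball $B_1(p_0,2\mathcal{R})$ and (ii) the ceiling in the number of doubling iterations. That slack is $O(1)$ \emph{in the exponent of $\lambda_P$}, so it is a multiplicative factor $\lambda_P^{O(1)}$, not an $O(1)$ constant: what you actually prove is $\EE[|g_w(P')|]\le e\cdot\lambda_P^{2+2\log(d\mathcal{R}/\eps)}$, which exceeds $8\lambda_P^{2\log(d\mathcal{R}/\eps)}$ whenever $\lambda_P\ge 2$. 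Relatedly, your dismissal of $\delta=w/d=\eps/d^2$ is backwards: that is precisely the paper's choice, and the resulting exponent $\log(d^2\mathcal{R}/\eps)$ is \emph{at most} $2\log(d\mathcal{R}/\eps)$ (since $\mathcal{R}\ge\eps$), with the leftover slack $\approx\log(\mathcal{R}/\eps)\ge 1$ in the exponent available to absorb the rounding; the price is only that cells-per-ball becomes $(1+2/d)^d\le e^2$, still swallowed by the constant $8$. So the fix is simply to take $\delta$ a constant factor smaller than the exponent-saturating value (or to use $\eps/d^2$ as the paper does), rather than to claim the exponent slack is a constant.
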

\begin{proof}
By the doubling constant definition, there exists a set of balls of radius $\eps/d^2$ centered at points in $P'$, of cardinality at most $\lambda_P^{2\log (d \mathcal{R}/\eps)}$ which covers $P'$. For each ball of radius $\eps/d^2$, the expected number of intersecting grid cells is $(1{+}2/d)^d \leq \ex^2$. The lemma follows by linearity of expectation.
\end{proof}

The next lemma shows that, with constant probability, the growth on the number of representatives, as we move away from $q$, is bounded. 

\begin{lemma}\label{lemma:grid2}
Let $\{D_i\}_{i \in \mathbb{N}}$ be a sequence of radii such that, for any $i$, $D_{i+1}=4 D_i$. Let $A_i$ be the points of $g_w(P)$ within distance $D_{i+1} = 2^{2(i+1)} D_0$ from $q$. Then, with probability at least $1/3$,
\[
\forall i\in \{-1,0,\ldots\}: |A_i| \leq 4^{i+3} \lambda_P^{2\log (d D_{i+1}/\eps)}.
\]
\end{lemma}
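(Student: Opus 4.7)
The plan is to control each $|A_i|$ individually via Lemma~\ref{lemma:grid1} together with Markov's inequality, and then to take a union bound over $i \in \{-1, 0, 1, \ldots\}$. The geometric growth factor $4^{i+3}$ in the target bound is precisely what makes the union bound summable, so the argument is conceptually straightforward once we set the thresholds correctly.

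First, for each $i \geq -1$, note that $A_i \subseteq g_w\bigl( B_1(q, D_{i+1}) \cap P \bigr)$, so Lemma~\ref{lemma:grid1} applied with radius $\mathcal{R} = D_{i+1}$ yields
\[
\EE \bigl[ |A_i| \bigr] \;\leq\; 8 \, \lambda_P^{2 \log(d D_{i+1}/\eps)}.
\]
Next, I would apply Markov's inequality with threshold $4^{i+3} \lambda_P^{2 \log(d D_{i+1}/\eps)}$ to obtain
\[
\prob \Bigl[ |A_i| > 4^{i+3} \lambda_P^{2 \log(d D_{i+1}/\eps)} \Bigr] \;\leq\; \frac{8}{4^{i+3}}.
\]

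Finally, I would take a union bound over all $i \in \{-1, 0, 1, \ldots\}$. The resulting geometric series is
\[
\sum_{i=-1}^{\infty} \frac{8}{4^{i+3}} \;=\; \frac{8}{16} \sum_{j=0}^{\infty} \frac{1}{4^{j}} \;=\; \frac{1}{2} \cdot \frac{4}{3} \;=\; \frac{2}{3},
\]
so the probability that the desired bound fails for some $i$ is at most $2/3$, which gives success probability at least $1/3$ as required.

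There is no real obstacle here: the only thing to be careful about is the indexing starting at $i = -1$, which contributes the dominant term $1/2$ to the failure probability, so the choice of constant $4^{i+3}$ (as opposed to, say, $4^{i+1}$) is essential for the union bound to close with probability at least $1/3$. All other ingredients---linearity of expectation, Markov, union bound, and the summability of a $1/4^i$ geometric series---are routine.
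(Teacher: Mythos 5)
Your proof is correct and follows essentially the same route as the paper: bound $\EE[|A_i|]$ via Lemma~\ref{lemma:grid1}, apply Markov's inequality to each $i$, and close with a union bound over a geometric series. If anything, your uniform choice of threshold $4^{i+3}\lambda_P^{2\log(dD_{i+1}/\eps)}$ for all $i\geq -1$ matches the stated bound more cleanly than the paper's own split into the $i=-1$ case and the $i\geq 0$ cases.
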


\begin{proof}
By Lemma \ref{lemma:grid1}, $\EE[ |A_i|] \leq 8 \lambda_P^{2\log (d D_{i+1}/\eps)}$ for every $i\in \{-1,0,\ldots\}$. Then, a union bound followed by Markov's inequality yields
\begin{equation*}
    \prob \big[ \exists i \in \{0,1,\ldots\} : |A_i| \geq {4^{i+1}} \EE[ |A_i|] \big] \leq  1/3.
\end{equation*}
In addition,
\begin{equation*}
    \prob \big[  |A_{-1}| \geq 4 \EE[ |A_i|] \big] \leq  1/4. \qedhere
\end{equation*}
\end{proof}

\begin{theorem}
\label{theorem:second}
Let $P\subset \ell_1^d$ such that $|P|=n$. 
For any $\eps \in (0,1/2)$, there is a non-linear randomized embedding $h': \ell_1^d \rightarrow \ell_1^k$, where $k = \left( \log{\lambda_P} \cdot \log(d / \eps) \right)^{\Theta(1/\eps)} / \zeta(\eps)$, for a function $\zeta(\eps)>0$ depending only on $\eps$, such that for any  $q \in \ell_1^d$ , if there exists $p^{*}\in P$ such that $\|p^{*}-q\|_1 \leq 1$, then with probability $\Omega(\eps)$,
\begin{align*}
    &\norm{h'(p^{*}) - f(q)}_1 \leq 1 + 3\eps, \\
    &\forall p\in P:~\|p-q\|_1 > 1+9\eps \implies \norm{h'(p) - f(q)}_1 > 1 + 3\eps.
\end{align*}
Any point can be embedded in time $O(dk)$.
\end{theorem}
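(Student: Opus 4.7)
The plan is to mirror the proof of Theorem \ref{theorem:main} almost verbatim, substituting the approximate net by the random grid snapping $g_w$ with $w = \eps/d$. Set $h' = f \circ g_w$, where $f$ is the Cauchy projection of Theorem \ref{theorem:indyk}, and apply $f$ directly to the query. Every cell of $g_w$ has $\ell_1$-diameter at most $\eps$, so $g_w$ introduces an additive error of at most $\eps$ on each of the relevant distances; consequently it suffices to bound the distortion that $f$ inflicts on the representatives $g_w(P)$ (and on $q$ itself). Throughout the analysis I would condition on the constant-probability event of Lemma \ref{lemma:grid2}, which controls the growth of $|g_w(P) \cap B_1(q, D)|$ by $4^{i+3} \lambda_P^{2\log (d D_{i+1}/\eps)}$ across annuli at doubling radii $D_i = 2^{2i} D_0$.

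As in the net-based case, I would split the argument at a threshold $D_0 = \Theta(\log(k/\eps))$. For representatives lying within $\ell_1$-distance $D_0 + \eps$ from $q$, I would invoke Theorem \ref{theorem:indyk} with parameters $\gamma = \eps/10$ and $\delta = \eps / (5 \lambda_P^{2\log(d D_0/\eps)})$, then take a union bound over the at most $O(\lambda_P^{2 \log(d D_0/\eps)})$ such representatives (afforded by Lemma \ref{lemma:grid1}). This simultaneously yields: (a) $\|h'(p^*) - f(q)\|_1 \leq 1 + 3\eps$ with probability $\Omega(\eps)$ for the near neighbor $p^*$; and (b) $\|h'(p) - f(q)\|_1 > 1 + 3\eps$ for every $p \in P$ with $\|p - q\|_1 \in (1 + 9\eps,\, D_0 + \eps]$, with failure probability at most $\eps/5$.

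For representatives at distance at least $D_0$ from $q$, I would repeat the calculation of Lemma \ref{lemma:farpoints} essentially word for word, replacing the doubling-based bound $\lambda_P^{4 \log(c D_{i+1}/\eps)}$ by the grid-based bound $4^{i+3} \lambda_P^{2 \log (d D_{i+1}/\eps)}$. Using $\tilde{S}$ via Eq.~\eqref{normineq} together with the Chernoff-type bound of Lemma \ref{conc-bound}, the failure probability becomes a geometric series summing to at most $\eps/5$, provided $k = \Omega(\log^2 \lambda_P \cdot \log(d/\eps))$. Resolving the mild self-reference $D_0 = \Theta(\log(k/\eps))$ as in Theorem \ref{theorem:main} (by absorbing $\log\log k$ into a slightly larger polynomial in $1/\eps$) and combining the four failure events with the conditioning event of Lemma \ref{lemma:grid2}, I obtain the claimed dimension $k = (\log \lambda_P \cdot \log(d/\eps))^{\Theta(1/\eps)} / \zeta(\eps)$ and overall success probability $\Omega(\eps)$.

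The expected main obstacle is bookkeeping rather than a new idea: the Lemma \ref{lemma:grid2} bound only holds with constant probability, and the randomness of the shift is shared across all annuli, so one cannot treat the growth bound as deterministic the way Theorem \ref{theorem:main} does with nets. This only costs a constant factor in the final probability of correctness, but it must be carefully tracked when combining with the Cauchy randomness of $f$. The running time claim is immediate: computing $g_w(x)$ costs $O(d)$, and multiplying by the $k \times d$ Cauchy matrix then costs $O(dk)$, with no net construction required.
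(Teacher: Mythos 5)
Your proposal matches the paper's proof essentially step for step: the same composition $h'=f\circ g_{\eps/d}$, the same split at $D_0=\Theta(\log(k/\eps))$, the same reuse of the Lemma~\ref{lemma:farpoints} calculation with the annulus counts from Lemma~\ref{lemma:grid2}, and the same probability accounting in which the constant-probability grid event costs only a constant factor. The one small slip is invoking Lemma~\ref{lemma:grid1} (an expectation bound) to justify the union bound over the near representatives --- you need a bound on the realized count, which is exactly what conditioning on the event of Lemma~\ref{lemma:grid2} supplies (the paper uses $32\lambda_P^{4\log(dD_0/\eps)}$ from that lemma); since you already condition on that event throughout, this is cosmetic and only perturbs constants in the choice of $\delta$.
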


\begin{proof}
We follow the same reasoning as in the proof of Theorem \ref{theorem:main}. The embedding is $h' = f \circ g_{\eps/d}$, where $f$ is the randomized linear map defined in Section~\ref{Sdimreduc}. As before, we apply $h'$ to every point in $P$, and only $f$ to queries. The randomly shifted grid incurs an additive error of $\eps$ in the distances between $q$ and $P$.

Assume wlog that $q = 0^d$ and let $A_i$ be the points of $g_{\eps /d}(P)$ within distance $D_{i+1} = 2^{2(i+1)} D_0$ from $q$. Hence, by Lemma \ref{lemma:grid2},
\begin{align*}
    \prob \left[ \exists i \exists s \in A_i : \norm{f(s)}_{1} \leq \frac{4 \norm{s}_{1}}{D_i}  \right] &\leq \sum_{i=0}^{\infty} |A_i| \prob \left[ S \leq \frac{4T}{{D_i}} \right]\\
    &\leq \sum_{i=0}^{\infty} 4^{i+3} \lambda_P ^ {2\log (d D_{i+1}/\eps )} \prob \left[\tilde{S} \leq \frac{\sqrt{4kT}}{\sqrt{D_i}} \right].
\end{align*}
As in Lemma \ref{lemma:farpoints}, for $D_0=\left\lceil{800 T}/{k} \right\rceil = \Theta(\log{(k/ \eps)})$, $k\geq 20 \log{\lambda_P} \cdot \log \left(\frac{dD_0}{\eps\delta}\right)$ and $\delta=\eps/5$,
\[
\sum_{i=0}^{\infty} 4^{i+3} \lambda_P ^ {2\log (d D_{i+1}/\eps )} \prob \left[\tilde{S} \leq \frac{\sqrt{4kT}}{\sqrt{D_i}} \right] \leq \sum_{i=0}^{\infty} \frac{2^{2i+6+2 \log \lambda_P [\log (d D_{0}/\eps) +2(i+1)]}}{2^{k(i+1)}} \leq \eps/5.
\]
Hence, for $k =\Omega \left((\log^2{\lambda_P} \cdot \log (d / \eps) \right)$, with probability at least $1- \eps/5$, we have:
\begin{equation*}
    \forall p \in P :~ \norm{p-q}_1 \geq D_0 + \eps \implies \norm{h'(p) - f(q)}_1 \geq 4.
\end{equation*}

Now, we are able to use Theorem \ref{theorem:indyk} for points which are at distance at most $D_0 + \eps$ from $q$, and the near neighbor. By Lemma \ref{lemma:grid2}, with constant probability, the number of grid points at distance $\leq D_0 + \eps$, is at most $32\cdot \lambda_P^{4 \log (dD_0/\eps)}$. Hence, by Theorem \ref{theorem:indyk}, for $\gamma = \eps/10$ and $\delta = \eps/ (160\lambda_P^{4\log{(d D_0 / \eps)}})$, with probability at least $1-\eps/5$, it holds:
\[
\forall p\in P:~\|p-q\|_1 \in (1+9\eps, D_0+ \eps)  \implies \norm{h'(p) - f(q)}_1 > 1 + 3\eps.
\]
Moreover, with probability at least $\eps/2$, we obtain:
\[
\norm{h'(p^*) - f(q)}_1 \leq 1 + 3\eps.
\]
As in Theorem \ref{theorem:main}, the target dimension needs to satisfy the following:
\[
k \geq \frac{\big( \ln{ ( 160\lambda_P^{4\log{(d D_0 / \eps)}} / \eps ) } \big) ^{2/\eps}}{ \zeta(\eps)}.
\]
Hence, for $k = \left( \log{\lambda_P} \cdot \log(d / \eps) \right)^{\Theta(1/\eps)} / \zeta(\eps) $ we achieve total probability of success $\Omega(\eps)$.
\end{proof}

\section{Conclusion}

We have filled in a gap in the spectrum of randomized embeddings with bounded distortion only for distances between the query and a pointset: such embeddings existed for $\ell_2$ and $\ell_1$ and for doubling subsets of $\ell_2$. Here we settle the case of doubling subsets of $\ell_1$ with a \textit{near} neighbor-preserving embedding.  In the meantime, we obtain concentration bounds on sums of independent Cauchy variables. 
Our algorithms are quite simple, therefore they should also be of practical interest.

We rely on approximate $r$-nets or randomly shifted grids. For the former,
Theorem~\ref{theorem:main} provides with a trade-off between the preprocessing time required and the target dimension. On the other hand, Theorem~\ref{theorem:second} has the advantage of fast preprocessing: any point is embedded in $O(dk)$ time, and the embedding is oblivious to the pointset. In regards to the near-linear preprocessing time, the two results are comparable, since the dimension in Theorem~\ref{theorem:second} can be substituted by the target dimension of Theorem~\ref{theorem:indyk}.  

Notice that any potential improvements to Theorem~\ref{theorem:indyk} should lead to improvements to Theorems~\ref{theorem:main} and~\ref{theorem:second}. The target dimension in these theorems follows from a direct application of Theorem~\ref{theorem:indyk} to the representative data points which lie inside a bounding ball centered at the query. 

\bibliography{main}

\begin{thebibliography}{10}

\bibitem{Ach03}
D.~Achlioptas.
\newblock Database-friendly random projections: {Johnson-Lindenstrauss} with
  binary coins.
\newblock {\em J. Comput. Syst. Sci.}, 66(4):671--687, 2003.

\bibitem{AC09}
N.~Ailon and B.~Chazelle.
\newblock The fast johnson-lindenstrauss transform and approximate nearest
  neighbors.
\newblock {\em SIAM J. Comput.}, 39(1):302--322, May 2009.

\bibitem{AEP18}
E.~Anagnostopoulos, I.~Z. Emiris, and I.~Psarros.
\newblock Randomized embeddings with slack and high-dimensional approximate
  nearest neighbor.
\newblock {\em {ACM} Trans. Algorithms}, 14(2):18:1--18:21, 2018.

\bibitem{AI08}
A.~Andoni and P.~Indyk.
\newblock Near-optimal hashing algorithms for approximate nearest neighbor in
  high dimensions.
\newblock {\em Commun.\ ACM}, 51(1):117--122, 2008.

\bibitem{ALRW17}
A.~Andoni, T.~Laarhoven, I.~P. Razenshteyn, and E.~Waingarten.
\newblock Optimal hashing-based time-space trade-offs for approximate near
  neighbors.
\newblock In {\em Proceedings of the Twenty-Eighth Annual {ACM-SIAM} Symposium
  on Discrete Algorithms, {SODA} 2017, Barcelona, Spain, Hotel Porta Fira,
  January 16-19}, pages 47--66, 2017.

\bibitem{ANNRW17}
A.~Andoni, H.~L. Nguyen, A.~Nikolov, I.~P. Razenshteyn, and E.~Waingarten.
\newblock Approximate near neighbors for general symmetric norms.
\newblock In {\em Proceedings of the 49th Annual {ACM} {SIGACT} Symposium on
  Theory of Computing, {STOC} 2017, Montreal, QC, Canada, June 19-23, 2017},
  pages 902--913, 2017.

\bibitem{BG16}
Y.~Bartal and L.~A. Gottlieb.
\newblock Dimension reduction techniques for $\ell_p, (1 <p<2)$, with
  applications.
\newblock In {\em 32nd International Symposium on Computational Geometry, SoCG
  2016, June 14-18, 2016, Boston, MA, {USA}}, pages 16:1--16:15, 2016.

\bibitem{BG18}
Y.~Bartal and L.~A. Gottlieb.
\newblock Approximate nearest neighbor search for {\textbackslash}ell {\_}p
  -spaces {(2} via embeddings.
\newblock In {\em {LATIN} 2018: Theoretical Informatics - 13th Latin American
  Symposium, Buenos Aires, Argentina, April 16-19, 2018, Proceedings}, pages
  120--133, 2018.

\bibitem{CG06}
R.~Cole and L.~A. Gottlieb.
\newblock Searching dynamic point sets in spaces with bounded doubling
  dimension.
\newblock In {\em Proceedings of the Thirty-eighth Annual ACM Symposium on
  Theory of Computing}, STOC '06, pages 574--583, New York, NY, USA, 2006. ACM.

\bibitem{EHS15}
D.~Eppstein, S.~Har{-}Peled, and A.~Sidiropoulos.
\newblock Approximate greedy clustering and distance selection for graph
  metrics.
\newblock {\em CoRR}, abs/1507.01555, 2015.

\bibitem{HIM12}
S.~Har{-}Peled, P.~Indyk, and R.~Motwani.
\newblock Approximate nearest neighbor: Towards removing the curse of
  dimensionality.
\newblock {\em Theory of Computing}, 8(1):321--350, 2012.

\bibitem{HPM05}
S.~Har-Peled and M.~Mendel.
\newblock Fast construction of nets in low dimensional metrics, and their
  applications.
\newblock In {\em Proc.\ 21st Annual Symp.\ Computational Geometry}, SCG'05,
  pages 150--158, 2005.

\bibitem{I01}
P.~Indyk.
\newblock On approximate nearest neighbors under l\({}_{\mbox{infinity}}\)
  norm.
\newblock {\em J. Comput. Syst. Sci.}, 63(4):627--638, 2001.

\bibitem{I06}
P.~Indyk.
\newblock Stable distributions, pseudorandom generators, embeddings, and data
  stream computation.
\newblock {\em J. {ACM}}, 53(3):307--323, 2006.

\bibitem{IM98}
P.~Indyk and R.~Motwani.
\newblock Approximate nearest neighbors: Towards removing the curse of
  dimensionality.
\newblock In {\em Proc.\ 30th Annual ACM Symp.\ Theory of Computing}, STOC'98,
  pages 604--613, 1998.

\bibitem{IN07}
P.~Indyk and A.~Naor.
\newblock Nearest-neighbor-preserving embeddings.
\newblock {\em ACM Trans.~ Algorithms}, 3(3), 2007.

\end{thebibliography}
\end{document}